\def\Tr#1{\text{tr}\left[{#1}\right]}
\def\tr#1{\text{tr}[{#1}]}
\def\Trr#1#2{\text{tr}_#1\left[{#2}\right]}
\def\Var#1{\text{var}\,{#1}}
\def\Cov#1{\text{cov}\left({#1}\right)}
\def\Ex#1#2{\mathbb{E}_{#1}\!\left({#2}\right)}
\def\Exb#1#2{\mathbb{E}_{#1}\!\left[{#2}\right]}
\def\Exc#1#2{\mathbb{E}_{#1}\!\left\{{#2}\right\}}
\def\ex#1{\mathbb{E}\,{#1}}
\def\Dir#1{D\left({#1}\right)}
\newcommand{\Id}{\mathbbm{1}}
\newcommand{\R}{\mathbb{R}}
\newcommand{\Hil}{\mathcal{H}}
\newcommand{\M}{\mathcal{M}}
\newtheorem{theorem}{Theorem}[section]
\newtheorem{corollary}{Corollary}[theorem]
\newtheorem{lemma}[theorem]{Lemma}
\begin{document}


\title{Estimating Bell diagonal states with separable measurements}

\author{Noah Kaufmann}
\email{noah.kaufmann@nbi.ku.dk}
\affiliation{Networked Quantum Devices Unit, Okinawa Institute of Science and Technology Graduate University, Okinawa, Japan}
\affiliation{Center for Hybrid Quantum Networks, Niels Bohr Institute, University of Copenhagen, Copenhagen, Denmark}

\author{Maria Quadeer}
\email{mariaquadeer@gmail.com}
\affiliation{Networked Quantum Devices Unit, Okinawa Institute of Science and Technology Graduate University, Okinawa, Japan}

\author{David Elkouss}
\email{david.elkouss@oist.jp}
\affiliation{Networked Quantum Devices Unit, Okinawa Institute of Science and Technology Graduate University, Okinawa, Japan}

\begin{abstract}
Quantum network protocols depend on the availability of shared entanglement. Given that entanglement generation and distribution are affected by noise, characterization of the shared entangled states is essential to bound the errors of the protocols. This work analyzes the estimation of Bell diagonal states within quantum networks, where operations are limited to local actions and classical communication. We demonstrate the advantages of Bayesian mean estimation over direct inversion and maximum-likelihood estimation, providing analytical expressions for estimation risk and supporting our findings with numerical simulations.
\end{abstract}

\maketitle


\section{Introduction}

In quantum information processing with a given set of operations, quantum states can be categorized as free states or resource states~\cite{Chitambar2019}. Free states are those that can be generated through the available set of operations, while resource states cannot. Access to resource states can greatly elevate a quantum device's power in computation and communication tasks, an example being magic states~\cite{Bravyi2005, gottesman1999, knill2005}. The interest of this work lies in entangled states, in particular Bell states, which are an essential resource in quantum networks. Bell states are at the center of many quantum network applications and are used, for example, in quantum key distribution~\cite{Ekert1991}, quantum teleportation~\cite{bennett1993}, and superdense coding~\cite{bennett1992}.

In quantum networks, where Bell states are subject to imperfect preparation and distribution, the error rate of a quantum information process depends on both the quality of the local operations and the fidelity of the shared resource states. While operations can be characterized using methods such as quantum process tomography~\cite{Chuang1997, Poyatos1997}, gate-set tomography~\cite{Merkel2013, blume-kohout2017}, or randomized benchmarking~\cite{Emerson2005, Knill2008, Magesan2011}, characterizing resource states involves some form of quantum state estimation.

A principal aim in designing state estimation protocols is to ensure that the estimation uncertainty scales efficiently with the number of measurements, particularly when characterizing resource states with limited distribution rates. This can be achieved by introducing a model for the imperfect states that focuses on the properties of interest and incorporates prior knowledge about the expected errors. A practical modeling approach for shared Bell pairs is to represent the noisy resource states as Bell diagonal states, which are statistical mixtures of Bell states. For example, a Bell state subjected to a depolarization channel results in such a Bell diagonal state~\cite{Liu_2017}. Bell diagonal states are also relevant in the security analysis of quantum key distribution protocols~\cite{Grasselli2021}. Moreover, since quantum communication protocols typically rely on knowing which of the four Bell states were initially shared, characterizing the diagonal elements of the density matrix in the Bell basis provides critical insights into the protocol's expected errors. Investigating the efficient characterization of Bell diagonal states is therefore a relevant problem.

After introducing concepts related to state estimation and Bell diagonal states, the main section of this work begins by establishing a general bound for the estimation of Bell diagonal states using quantum Fisher information. We show that this bound is achievable with Bell state measurements, which, though nonlocal, serve as a valuable reference for separable protocols. Next, we compare the analytical estimation risks of two approaches, i.e., Bayesian mean estimation (BME) and direct inversion, focusing on a network context where only local measurements and classical communication are allowed. The protocol relies on partial knowledge of the device and assumes trust in the channel messages, which can be established independently of the characterization protocol by a user authentication protocol~\cite{Dutta2022}. Finally, we present numerical results that examine the estimation risk across a broader range of measurement sets and estimators, highlighting the advantages of Bayesian mean estimation over both maximum-likelihood estimation and direct inversion.


\section{Preliminaries}

This section introduces key concepts and notation needed for the analysis and derivations that follow, focusing on Bell diagonal states, quantum state estimation techniques, and the quantum Cramér-Rao bound.

\subsection{Bell diagonal states}
The Bell states, also known as Einstein-Podolsky-Rosen pairs, form an orthonormal basis, called the Bell basis, for the two-qubit state space, where each state is maximally entangled~\cite{NielsenChuang2011}. The four Bell states are:
\begin{equation}
\label{eq:BellStates}
\begin{aligned}
    \ket{\Psi_1} &= \ket{\Phi^+} =  2^{-\frac{1}{2}}(\ket{00} + \ket{11})\\
    \ket{\Psi_2} &= \ket{\Phi^-} =  2^{-\frac{1}{2}} (\ket{00} - \ket{11})\\
    \ket{\Psi_3} &= \ket{\Psi^+} =  2^{-\frac{1}{2}} (\ket{01} + \ket{10})\\
    \ket{\Psi_4} &= \ket{\Psi^-} =  2^{-\frac{1}{2}} (\ket{01} - \ket{10}).
\end{aligned}
\end{equation}

The convex hull of the Bell states is the set of Bell diagonal states. They are a statistical mixture of Bell states and can be expressed in terms of the vector $\bm{\theta}$, with $\Sigma \theta_i = 1$ and $\theta_i \in [0, 1]$, as:
\begin{equation}
    \label{eq:spectral_decomposition}
    \rho = \sum_{i=1}^4 \theta_i \ketbra{\Psi_i}{\Psi_i}.
\end{equation}
In the following, we will use the notation $\Psi_i$ for the density matrix $\ketbra{\Psi_i}{\Psi_i}$.

\begin{figure}
  \centering
  \includegraphics[width=0.9\columnwidth]{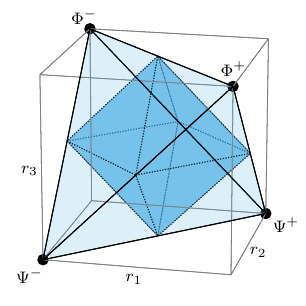}
  \caption{Bell diagonal states can be geometrically represented by interpreting vector $\bm{t}$ of Eq.~\eqref{eq:pauli_rep} as a point in $\R^3$ \cite{Horodecki1996}. All such states lie within a tetrahedron, formed as the convex hull of four vertices corresponding to the Bell states, $\Phi^+, \Phi^-, \Psi^+$, and $\Psi^-$. Within this tetrahedron, the set of separable states forms a shaded octahedron, satisfying the condition $\abs{t_1} + \abs{t_2} + \abs{t_3} \leq 1$. The completely mixed state is located at the center of the tetrahedron.}
  \label{fig:figure1}
\end{figure}

Another way to parametrize Bell diagonal states is by expressing them in a basis generated by the Pauli operators. Together with the identity, the Pauli matrices $\sigma_1 = \sigma_x, \sigma_2 = \sigma_y,$ and $\sigma_3 = \sigma_z$ form a basis $\mathbb{P}^{\otimes 2} = \{ \Id, \sigma_1, \sigma_2, \sigma_3 \}^{\otimes 2}$ for the $4 \cross 4$ Hermitian matrices. Since density matrices are Hermitian, a general two-qubit state $\rho$ can be represented in this Pauli basis by:
\begin{equation}
    \label{eq:gen_state_pauli_rep}
    \rho = \frac{1}{4}\Biggl( \Id \! \otimes \! \Id + \bm{a} \cdot \bm{\sigma} \! \otimes \! \Id + \Id \! \otimes \! \bm{b} \cdot \bm{\sigma} + \sum_{i,j=1}^3 T_{ij} \sigma_i \! \otimes \! \sigma_j \Biggr),
\end{equation}
where $\bm{\sigma} = [\sigma_1, \sigma_2, \sigma_3]^\intercal$; $\bm{a}, \bm{b} \in \R^3$; and $\bm{T} \in \R^{3\cross3}$. Denoting the diagonal elements of $\bm{T}$ by the vector $\bm{t}$, with components $t_i$, a Bell diagonal state $\rho$ can be represented by the simpler form~\cite{Garding2021}:
\begin{equation}
    \label{eq:pauli_rep}
    \rho = \frac{1}{4} \Biggl(\Id \otimes \Id + \sum_{i=1}^3 t_i \sigma_i \otimes \sigma_i\Biggr).
\end{equation}
Interpreting $\bm{t} \in [-1, 1]^3$ as a point in $\R^3$, the constraint that $\rho$ is physical corresponds to the condition that $\bm{t}$ lies inside the tetrahedron whose vertices correspond to the Bell states, as illustrated in Fig.~\ref{fig:figure1}. Furthermore, the set of separable Bell diagonal states, which satisfy the condition $\abs{t_1}+\abs{t_2}+\abs{t_3} \leq 1$, forms an octahedron within this geometric representation~\cite{Lang2010}.

The relation between the parameters $\bm{\theta}$ in the convex sum representation of Eq.~\eqref{eq:spectral_decomposition} and $\bm{t}$ in the Pauli representation of Eq.~\eqref{eq:pauli_rep} is given by~\cite{Garding2021}:
\begin{equation}
    \setlength\arraycolsep{2pt}
    \label{eq:rel}
    \bm{\theta} = \frac{1}{4} \begin{bmatrix*}[r]
    1 & \text{-}1 & 1 \\
    \text{-}1 & 1 & 1 \\
    1 & 1 & \text{-}1 \\
    \text{-}1 & \text{-}1 & \text{-}1 
    \end{bmatrix*} \bm{t} + \frac{1}{4} \begin{bmatrix*}[r]
    1\\
    1\\
    1\\
    1 
    \end{bmatrix*}, \quad \
    \bm{t} = \begin{bmatrix*}[r]
    1 & \text{-}1 & 1 & \text{-}1 \\
    \text{-}1 & 1 & 1 & \text{-}1 \\
    1 & 1 & \text{-}1 & \text{-}1
    \end{bmatrix*} \bm{\theta}. 
\end{equation}

Representing the output of a noisy Bell pair generation process as a Bell diagonal state is valid when the noise can be modeled as a Pauli channel $\mathcal{P}$,
\begin{equation}
\label{eq:PauliChannels}
    \mathcal{P}(\rho) = \sum_{i=0}^3\sum_{j=0}^3 p_{ij} \left(\sigma_i \otimes \sigma_j \right) \rho \left( \sigma_i \otimes \sigma_j \right),
\end{equation}
where $\sigma_0$ is the identity $\Id$ and the coefficients $p_{ij}$ form a probability distribution. Crucially, Pauli channels preserve the Bell diagonal structure of a state. For example, applying dephasing noise to a maximally entangled Bell state results in a Bell diagonal state. This follows directly from applying Eq.~\eqref{eq:PauliChannels} to Eq.~\eqref{eq:pauli_rep} and using the property $\sigma_i \sigma_j \sigma_i = (-1)^{\langle i, j \rangle} \sigma_j$, where $\langle i, j \rangle$ equals 0 if $\sigma_i$ and $\sigma_j$ commute and 1 otherwise. Among Pauli channels, depolarization is particularly relevant for entanglement distribution protocols~\cite{Ecker2019, Avis2023, Mor2024, deBone2024}, and can therefore be analyzed within the framework developed in this paper. However, some entanglement generation protocols fall outside this model, e.g., single-photon interference~\cite{cabrillo1999creation}, and certain important noise models, such as amplitude damping, are not Pauli channels.

For cases where the output of a noisy Bell state generation is not a Bell diagonal state, there exists a scheme that combines four copies of an arbitrary two-qubit state $\rho$ such that, on average, they are represented by a Bell diagonal state~\cite{Bennett1996}. As described in Ref.~\cite{Bennett1996}, this is achieved by applying one of the local operations $\Id \otimes \Id$, $\sigma_x \otimes \sigma_x$, $\sigma_y \otimes \sigma_y$, or $\sigma_z \otimes \sigma_z$ to each copy of $\rho$. Averaging over those transformed density matrices,
\begin{equation}
 \Tilde{\rho} = \frac{1}{4} \sum_{i=0}^3 \left(\sigma_i \otimes \sigma_i \right) \rho \left(\sigma_i \otimes \sigma_i \right),
 \end{equation}
results in a density matrix $\Tilde{\rho}$ that is diagonal in the Bell basis. The scheme is analogous to Pauli twirling, where an arbitrary noise channel is projected onto a Pauli channel by randomly applying Pauli gates to different circuit instances and averaging the measurement outcomes over many circuits~\cite{Bennett1996, Geller2013}.

\subsection{State estimation}
\label{sec:estimation}
Given a quantum device that prepares an unknown state $\rho_0$, quantum state estimation is the task of finding a description $\hat{\rho}$ of $\rho_0$ that allows predicting the outcome probabilities of any measurement performed on the device~\cite{paris2004}. We discuss three types of estimators: direct inversion, maximum-likelihood estimation, and Bayesian mean estimation.

We note that it is possible to characterize states \cite{vsupic2020self} and gates \cite{Noller2025} without trust in the measurement devices. In the following, we do not consider this adversarial setup and assume trusted devices and parties. These are natural assumptions in a practical scenario where local devices can be characterized before engaging in network tasks, and parties can be trusted after following an authentication protocol.

In quantum state tomography, or direct inversion, the observed frequencies of measurement outcomes $x$ are interpreted as the outcome probability of the respective measurement. The direct inversion estimate, $\hat{\rho}_T$, is then obtained by inverting Born's rule~\cite{NielsenChuang2011}. This approach uniquely determines $\hat{\rho}_T$ but may yield an unphysical state with negative eigenvalues~\cite{blume2010optimal}.

Maximum-likelihood estimation~\cite{Hradil1997, James2001} consists of finding the state that maximizes the likelihood function $\mathcal{L}(\rho) = p(x|\rho)$ over the set of density operators $\mathcal{S}(\Hil)$:
\begin{equation}
    \hat{\rho}_M = \arg \max_{\rho \in \mathcal{S}(\Hil)} \mathcal{L}(\rho).
\end{equation}
This method ensures that the estimated state is physical, but may result in rank-deficient estimates~\cite{blume2010optimal}, meaning some eigenvalues of the estimated state $\hat{\rho}$ are zero. This poses a problem, as it assigns zero probability to certain possible measurements, an outcome that is unjustifiable given a finite number of measurements and incompatible with error bars~\cite{blume2010optimal}.

The problem of rank-deficient estimates is avoided in Bayesian mean estimation~\cite{blume2010optimal, Granade2016, Lukens2020}, where a distribution over possible states is computed to reflect their relative plausibility. Given a prior distribution over the states $\pi(\rho)$, the posterior distribution $p(\rho|x)$ is obtained by applying Bayes' theorem:
\begin{equation}
\label{eq:BayesTheorem}
    p(\rho|x) = \frac{p(x|\rho) \pi(\rho)}{\int p(x|\rho) \pi(\rho) d\rho}.
\end{equation}
The Bayesian mean estimate denoted by $\hat{\rho}_B$ is then the mean state with respect to this posterior distribution:
\begin{equation}
    \hat{\rho}_B (x) = \int \rho \, p(\rho|x) \, d\rho.
\end{equation}
Since this estimation is based on a distribution, standard statistical tools, such as the covariance matrix, can be used to compute the uncertainty of the estimate~\cite{blume2010optimal}.

\subsection{Estimation risk}
\label{sec:risk}
The closeness of an estimate $\hat{\rho}(x)$ to the true state $\rho_0$ can be quantified by a loss function $L\left[\rho_0, \hat{\rho}(x)\right]$, which satisfies $L \left(\rho_0, \rho_0 \right) = 0$ and $L\left[\rho_0, \hat{\rho}(x)\right] \geq 0$, with equality if and only if $\rho_0 = \hat{\rho}(x)$~\cite{Quadeer2019}. In our analysis of Bell diagonal state estimation, where $\rho_0 = \sum_i \theta_i \Psi_i$ and $\hat{\rho}(x) = \sum_i \hat{\theta}_i(x) \Psi_i$, we define the loss function in terms of the Hilbert-Schmidt (HS) distance, which in this case is proportional to the mean-square error of the estimated vector $\hat{\bm{\theta}}$:
\begin{equation}
\label{eq:HS-distance_t}
    L\left[\rho_0, \hat{\rho}(x)\right]  = \Tr{\abs{\hat{\rho}(x) - \rho_0}^2} = \sum_{i=1}^4 \left[\hat{\theta}_i(x) - \theta_i\right]^2 \!.
\end{equation}

For a fixed $\rho_0$, the risk $ R\left(\rho_0, \hat{\rho}\right)$ of an estimator $\hat{\rho}$ describes the average closeness of the estimate over the space of possible measurement outcomes $X$~\cite{lehmann1998}:
\begin{equation}
    R\left(\rho_0, \hat{\rho}\right) = \Exc{X|\rho_0}{L \left[\rho_0, \hat{\rho}(X)\right]}.
\end{equation}
In terms of the Hilbert-Schmidt distance, the risk takes the form
\begin{equation}
\label{eq:HS_risk}
     R\left(\rho_0, \hat{\rho}\right) = \sum_{i=1}^4 \Exc{X|\bm{\theta}}{\left[\hat{\theta}_i(X) - \theta_i \right]^2}.
\end{equation}
For unbiased estimators, meaning $\mathbb{E}(\hat{\bm{\theta}}) = \bm{\theta}$ for all $\bm{\theta}$~\cite{demkowicz2020multi}, the sum of mean-square errors in Eq.~\eqref{eq:HS_risk} corresponds to the sum of the variances of the estimated parameters
\begin{equation}
\label{eq:risk_variance}
    R\left(\rho_0, \hat{\rho}\right) = \sum_{i=1}^4 \Var{\hat{\theta}_i} = \frac{1}{4} \sum_{i=1}^3 \Var{\hat{t}_i}.
\end{equation}

One way to eliminate the dependence of the estimator's risk on the generally unknown true state $\rho_0$ is to average the risk over a distribution $\pi(\rho)$ of possible states. This yields the average risk $r(\pi, \hat{\rho})$, defined as:
\begin{equation}
    \label{eq:avg_risk}
    r(\pi, \hat{\rho}) = \int R\left(\rho, \hat{\rho}\right) \pi(\rho) d\rho.
\end{equation}

Assuming a known prior distribution $\pi(\rho)$, the Bayesian mean estimator minimizes the average risk when the distance is defined by a Bregman divergence~\cite{Banerjee2005}. Hence, for the class of Bregman divergences, which includes the Hilbert-Schmidt distance, the average risk of the Bayesian mean estimator provides a lower bound for the average risk of any estimator.

The fidelity $F(\hat{\rho}, \rho_0)$ between the estimate $\hat{\rho}$ and the true state $\rho_0$ can be bounded in terms of the Hilbert-Schmidt distance $L(\hat{\rho}, \rho_0)$ by combining norm inequalities for the trace distance~\cite{Fuchs1999, Coles2019}:
\begin{equation}
    \left[1 - \sqrt{L(\hat{\rho}, \rho_0)} \right]^2 \leq F(\hat{\rho}, \rho_0) \leq 1 - \frac{1}{2} L(\hat{\rho}, \rho_0).
\end{equation}

\subsection{Quantum Cramér-Rao Bound}
\label{sec:QCR}
A lower bound for the variance of unbiased estimators can be found using the quantum Cramér-Rao bound~\cite{Helstrom1969, liu2020quantum}. This bound states that for a density matrix $\rho(\bm{y})$ in which a vector of unknown parameters $\bm{y}$ is encoded, the covariance matrix $\Cov{\hat{\bm{y}}}$ of an unbiased estimator $\hat{\bm{y}}$ satisfies:
\begin{equation}
\label{eq:QCR_bound}
    \Cov{\hat{\bm{y}}} \geq \frac{1}{N} \mathcal{F}_{\bm{y}}^{-1}, 
\end{equation}
where $\mathcal{F}_{\bm{y}}$ is the quantum Fisher information matrix and $N$ the number of repetitions of the experiment. For the total variance, this implies $\sum_i \Var{\hat{y}_i} \geq \frac{1}{N} \tr{\mathcal{F}_{\bm{y}}^{-1}}$~\cite{liu2020quantum}. The same bound holds when, instead of performing $N$ experiments on $\rho(\bm{y})$, one conducts a collective measurement on $\rho(\bm{y})^{\otimes N}$. This is because $\mathcal{F}_{N, \bm{y}} = N \mathcal{F}_{1, \bm{y}}$, where $\mathcal{F}_{N, \bm{y}}$ is the Fisher information of the state $\rho(\bm{y})^{\otimes N}$ and $\mathcal{F}_{1, \bm{y}}$ of the state $\rho(\bm{y})$~\cite{Hayashi2008}. Finally, alternative scalar bounds to the total variance can be obtained by introducing a positive-semidefinite matrix $W$, leading to the inequality $\tr{W\Cov{\hat{\bm{y}}}} \geq \tr{W \mathcal{F}_{\bm{y}}^{-1}}/N$~\cite{Albarelli2020, Notarnicola2024}.


\section{Results}

In this section we compare and discuss the risk of different Bell diagonal state estimators. Before considering specific types of measurements, we derive a general lower bound on the risk of unbiased estimators by applying the quantum Cramér-Rao bound.

\begin{lemma}
    Given $N$ copies of a Bell diagonal state $\rho_0 = \Sigma_i \theta_i \Psi_i$, the risk in terms of the Hilbert-Schmidt distance of an unbiased estimator $\hat{\rho}$ is lower bounded by
    \begin{equation}
    \label{eq:QFI}
        R\left(\rho_0, \hat{\rho}\right) \geq \frac{1 - \sum_{i=1}^4 \theta_i^2}{N}.
    \end{equation}    
\end{lemma}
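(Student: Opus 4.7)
The plan is to parametrize $\rho_0$ by three independent components, invoke the quantum Cramér--Rao bound of Eq.~\eqref{eq:QCR_bound} in its scalar form for each $\hat{\theta}_i$, and then sum the resulting variance bounds to recover the total risk. Since $\sum_i \theta_i = 1$, I would take $\bm{y}=(\theta_1,\theta_2,\theta_3)$ as the independent variables and $\theta_4(\bm{y})=1-y_1-y_2-y_3$.

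The key observation is that $\rho(\bm{y}) = \sum_i \theta_i(\bm{y}) \Psi_i$ is diagonal in a parameter-independent basis, so its symmetric logarithmic derivatives are themselves diagonal in the Bell basis and the QFI matrix reduces to the classical Fisher information of the multinomial $\{\theta_i(\bm{y})\}$,
\begin{equation}
    \mathcal{F}_{kl} = \sum_{i=1}^4 \frac{1}{\theta_i}\,\frac{\partial\theta_i}{\partial y_k}\,\frac{\partial\theta_i}{\partial y_l}.
\end{equation}
Differentiating yields $\mathcal{F} = \mathrm{diag}(\theta_1^{-1},\theta_2^{-1},\theta_3^{-1}) + \theta_4^{-1}\bm{1}\bm{1}^T$, which I would invert via the Sherman--Morrison formula to obtain the compact form $\mathcal{F}^{-1} = \mathrm{diag}(\theta_1,\theta_2,\theta_3) - \bm{y}\bm{y}^T$.

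Applying the scalar QCR to each $\hat{\theta}_i$ viewed as an unbiased estimator of $\theta_i(\bm{y})$ gives $\Var{\hat{\theta}_i} \geq \nabla\theta_i^T\mathcal{F}^{-1}\nabla\theta_i/N$. Summing over $i$ and collecting the gradients into the $4\times 3$ Jacobian $J$ of $\bm{\theta}(\bm{y})$, whose first three rows form $I_3$ and whose last row is $-(1,1,1)$, yields
\begin{equation}
    R \geq \frac{1}{N}\tr{J^T J\,\mathcal{F}^{-1}} = \frac{1}{N}\bigl(\tr{\mathcal{F}^{-1}} + \bm{1}^T\mathcal{F}^{-1}\bm{1}\bigr),
\end{equation}
since $J^T J = I_3 + \bm{1}\bm{1}^T$. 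Substituting the explicit $\mathcal{F}^{-1}$, using $\sum_{i=1}^3\theta_i = 1-\theta_4$, and collecting terms reduces the right-hand side to $(1-\sum_{i=1}^4\theta_i^2)/N$.

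The only subtle step is the bookkeeping around the trace-one constraint: choosing a minimal parametrization, converting the scalar QCR bounds into a single trace bound via the Jacobian, and verifying that the additional contraction with $\bm{1}\bm{1}^T$ telescopes cleanly once $\sum_i\theta_i=1$ is used. Once the QFI is identified with the multinomial classical Fisher information, no further estimates are needed and the remainder is a short linear-algebra computation.
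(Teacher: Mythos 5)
Your proposal is correct and follows essentially the same route as the paper: both reduce the quantum Fisher information to the classical Fisher information of the multinomial eigenvalue distribution (since the Bell basis is parameter-independent), invert it, and apply the quantum Cramér--Rao bound to the sum of variances in Eq.~\eqref{eq:risk_variance}. The only difference is cosmetic --- you parametrize by $(\theta_1,\theta_2,\theta_3)$ and invert via Sherman--Morrison, whereas the paper works in the $\bm{t}$ coordinates of Eq.~\eqref{eq:pauli_rep}; these are related by an invertible linear map, so the resulting bound is identical (just take care to restrict the sum defining $\mathcal{F}$ to the support of $\rho_0$, as the paper does, to avoid division by zero when some $\theta_i$ vanishes).
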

\begin{proof}
    The idea of the proof is to calculate the quantum Fisher information for a Bell diagonal state $\rho_0 = \Sigma_i \theta_i \Psi_i$ and apply the quantum Cramér-Rao bound to Eq.~\eqref{eq:risk_variance}. $\rho_0$ can be parametrized with three independent parameters $\bm{t}$ according to Eq.~\eqref{eq:pauli_rep}. In the spectral decomposition $\rho_0(\bm{t}) = \sum_i \theta_i(\bm{t}) \Psi_i$ only the eigenvalues $\theta_i$ but not the eigenstates $\Psi_i$ depend on $\bm{t}$. This leads to a simple expression for the elements of the quantum Fisher information matrix $\mathcal{F}_{\bm{\theta}}$~\cite[Theorem 2.2]{liu2020quantum}:
    \begin{equation}
        \left[\mathcal{F}_{\bm{\theta}}\right]_{ab} = \sum_{\theta_i \in \mathcal S} \frac{(\frac{\partial}{\partial t_a} \theta_i) (\frac{\partial}{\partial t_b} \theta_i)}{\theta_i},
    \end{equation}
    where $\mathcal{S}=\{\theta_i \in \{\theta_i\} | \theta_i \neq 0\}$ is the support of $\rho_0$. Evaluating $\mathcal{F}_{\bm{\theta}}$ with Eq.~\eqref{eq:rel} and using the quantum Cramér-Rao bound of Eq.~\eqref{eq:QCR_bound} yields
    \begin{equation}
        \sum_{i=1}^3 \Var{\hat{t}_i} \geq 4 \, \frac{1 - \sum_{i=1}^4 \theta_i^2}{N}.
    \end{equation}
    Finally, applying this result to Eq.~\eqref{eq:risk_variance} provides a lower bound for the Hilbert-Schmidt distance risk of any unbiased estimator $\hat{\rho}$ and any set of measurements on the $N$-copy Bell diagonal state $\rho_0(\bm{t})^{\otimes N}$.
\end{proof}    

\subsection{Bell state measurements}
\label{sec:BellStateMeas}

In the absence of any practical limitations, Bell state measurements are the optimal choice for estimating Bell diagonal states. Since the Bell basis is the eigenbasis of the states in the ensemble, the estimation reduces to a completely classical statistical model~\cite{Suzuki2019, Albarelli2020}. However, Bell state measurements are not practically realizable in many distributed scenarios. We include them here to illustrate how the bound in Eq.~\eqref{eq:QFI} can be achieved and to provide intuition as to why this is not possible with separable strategies.

Formally, the measurement process corresponds to measuring the Bell diagonal state $\rho_0(\bm{\theta})$ using the positive operator-valued measure (POVM) $\M = \{\Psi_i\}_{i=1}^4$. The probability of observing outcome $\Psi_i$ is given by $\Tr{\Psi_i \rho(\bm{\theta})} = \theta_i$. Furthermore, the number of measurement outcomes $\bm{X} = (X_1, X_2, X_3, X_4)$ obtained from $N$ Bell state measurements on the Bell diagonal state $\rho_0\left(\bm{\theta}\right)$ follows a multinomial distribution with the parameters $N$ and $\bm{p} = \bm{\theta}$. 

\begin{figure}
  \centering
  \includegraphics[width=\columnwidth]{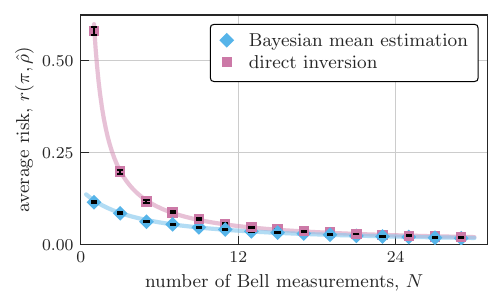}
  \caption{Average risk in terms of the Hilbert-Schmidt distance over a uniform prior for Bell state measurements plotted against the number of measurements $N$ for Bayesian mean estimation and direct inversion. The solid lines correspond to the analytical results of Eqs.~\eqref{eq:avg_risk_bsm_di} and \eqref{eq:avg_risk_bsm_b}. Each data point corresponds to the average over $1000$ samples. For each sample, the true state $\rho_0$ is drawn uniformly and $N$ Bell state measurements are simulated. To approximate the posterior distribution in the Bayesian mean estimation, the state space is discretized with $10^4$ states.}
  \label{fig:figure2}
\end{figure}

The expectation value of $X_i$ is given by $\Ex{}{X_i} = N \theta_i$. Inverting this relationship leads to the direct inversion estimate:
\begin{equation}
    \label{eq:BSM-di}
    \hat{\rho}_T(\bm{x}) = \sum_{i=1}^{4} \frac{x_i}{N} \Psi_i.
\end{equation}
Note that since $\sum x_i = N$ and $x_i > 0$, the state $\hat{\rho}_T = \sum \hat{\theta}_i \Psi_i$ is a physical state for all possible measurement outcomes $\bm{x}$. Hence, direct inversion estimation and maximum-likelihood estimation coincide in this case, as visible in Fig.~\ref{fig:figure4}~\cite{blume2010optimal}. 

\begin{lemma}[Direct inversion, Bell state meas.]
Given the outcome of $N$ Bell state measurements of the Bell diagonal state $\rho_0 = \Sigma \theta_i \Psi_i$, the risk in terms of the Hilbert-Schmidt distance for the direct inversion estimator $\hat{\rho}_T$ is,
\begin{equation}
    \label{eq:risk_bsm_di}
    R\left[\rho_0(\bm{\theta}), \hat{\rho}_T\right] = \frac{1 - \sum_{i=1}^4 \theta_i^2}{N},
\end{equation}
and the average risk for a uniformly distributed $\bm{\Theta}$ is
\begin{equation}
\label{eq:avg_risk_bsm_di}
    r\left(\pi, \hat{\rho}_T\right) = \frac{3}{5N}.
\end{equation}
\end{lemma}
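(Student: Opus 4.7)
The plan is to exploit the fact that $N$ Bell state measurements on $\rho_0(\bm{\theta})$ yield a multinomial sample $\bm{X}\sim\text{Mult}(N,\bm{\theta})$, which reduces the risk computation to a purely classical one.

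First I would observe that, under the multinomial distribution, $\mathbb{E}[X_i]=N\theta_i$ and $\Var{X_i}=N\theta_i(1-\theta_i)$. Hence the direct inversion estimator $\hat{\theta}_i(\bm{X})=X_i/N$ from Eq.~\eqref{eq:BSM-di} is unbiased and has variance $\theta_i(1-\theta_i)/N$. Since $\hat{\rho}_T$ is unbiased, I can apply the variance form of the risk from Eq.~\eqref{eq:risk_variance}:
\begin{equation}
R\left[\rho_0(\bm{\theta}),\hat{\rho}_T\right]=\sum_{i=1}^{4}\Var{\hat{\theta}_i}=\frac{1}{N}\sum_{i=1}^{4}\theta_i(1-\theta_i)=\frac{1-\sum_{i=1}^{4}\theta_i^2}{N},
\end{equation}
where the last step uses $\sum_i\theta_i=1$. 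This establishes Eq.~\eqref{eq:risk_bsm_di} and, as a bonus, shows that direct inversion already saturates the quantum Cramér-Rao bound of Eq.~\eqref{eq:QFI}.

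For the average risk I would take $\pi$ to be the uniform (Lebesgue) measure on the probability simplex, i.e.\ the Dirichlet$(1,1,1,1)$ distribution; via the affine bijection of Eq.~\eqref{eq:rel} this agrees with the uniform measure on the Bell tetrahedron of Fig.~\ref{fig:figure1}, so the prior is canonical. Each marginal $\theta_i$ is then Beta$(1,3)$, and standard Dirichlet moments give $\mathbb{E}_\pi[\theta_i^2]=\frac{1\cdot 2}{4\cdot 5}=\frac{1}{10}$, so $\sum_i\mathbb{E}_\pi[\theta_i^2]=\frac{2}{5}$. Substituting the per-state risk into Eq.~\eqref{eq:avg_risk} yields
\begin{equation}
r(\pi,\hat{\rho}_T)=\frac{1}{N}\Biggl(1-\sum_{i=1}^{4}\mathbb{E}_\pi[\theta_i^2]\Biggr)=\frac{1}{N}\Bigl(1-\frac{2}{5}\Bigr)=\frac{3}{5N},
\end{equation}
as claimed. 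Both steps are elementary; the only point that requires a moment of care is fixing the prior, but the affine identification between $\bm{\theta}$ and $\bm{t}$ removes any ambiguity, and the main (but still routine) calculation is the Dirichlet second moment.
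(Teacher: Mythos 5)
Your proposal is correct and follows essentially the same route as the paper: unbiasedness of $\hat{\theta}_i = X_i/N$, the binomial/multinomial variance $\Var{X_i}=N\theta_i(1-\theta_i)$ inserted into Eq.~\eqref{eq:risk_variance}, and then averaging over the uniform Dirichlet prior. The only difference is that you spell out the Dirichlet second moment $\mathbb{E}_\pi[\theta_i^2]=1/10$ explicitly, which the paper leaves implicit; both the per-state risk and the $3/(5N)$ average check out.
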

\begin{proof}
    As evident from Eq.~\eqref{eq:BSM-di}, $\hat{\rho}_T$ is unbiased. Therefore, Eq.~\eqref{eq:risk_variance} can be used to calculate the risk. From Eq.~\eqref{eq:BSM-di} we observe $N^2 \, \Var{\hat{\theta}_i} = \Var{X_i}$. Moreover, $\Var{X_i} = N \theta_i (1- \theta_i)$ because each $X_i \in \bm{X}$ is binomially distributed with the parameters $N$ and $\theta_i$. Finally, substituting $\Var{\hat{\theta}_i} = (\theta_i - \theta_i^2)/N$ into Eq.~\eqref{eq:risk_variance} and using that $\Sigma_i \theta_i = 1$ leads to the expression for the risk. The average risk is then calculated according to
    \begin{equation}
    \label{eq:avg_risk_calc}
        r\left(\pi, \hat{\rho}\right) = \int_{\bm{\Theta}} R\left[\rho_0(\bm{\theta}), \hat{\rho}\right] d\pi(\bm{\theta}).
    \end{equation}
\end{proof}

\begin{corollary}
    Bell state measurements are optimal for unbiased Bell diagonal state estimation in the sense that Eq.~\eqref{eq:risk_bsm_di} achieves the lower bound obtained in Eq.~\eqref{eq:QFI}. There is no advantage in doing a collective measurement on $\rho_0(\bm{\theta})^{\otimes N}$ as the quantum Cramér-Rao bound is met by $N$ separable Bell state measurements.
\end{corollary}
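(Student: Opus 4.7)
The plan is to read off both claims by comparing the expressions already established in Lemma~1 and Lemma~2. First I would observe that the direct inversion estimator $\hat{\rho}_T$ from Eq.~\eqref{eq:BSM-di} is unbiased, so Lemma~1 applies and gives the lower bound $R(\rho_0,\hat{\rho}) \geq (1-\sum_i\theta_i^2)/N$ for any unbiased strategy. Lemma~2 then evaluates the risk of $\hat{\rho}_T$ under Bell state measurements and yields exactly $(1-\sum_i\theta_i^2)/N$. The first claim follows immediately: the quantum Cram\'er-Rao bound is saturated pointwise in $\bm{\theta}$ by the separable measurement scheme $\M = \{\Psi_i\}_{i=1}^4$ combined with the inversion estimator.

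For the collective-measurement claim, I would invoke the additivity of the quantum Fisher information on product states, $\mathcal{F}_{N,\bm{t}} = N \mathcal{F}_{1,\bm{t}}$, which was quoted in Sec.~\ref{sec:QCR}. This means the QCRB derived in Lemma~1 already incorporates the most general collective POVM on $\rho_0(\bm{\theta})^{\otimes N}$; no collective protocol can push the risk below $(1-\sum_i\theta_i^2)/N$. Since $N$ independent Bell state measurements followed by direct inversion already reach this bound, collective measurements offer no improvement in the Hilbert-Schmidt risk.

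Essentially no calculation is needed — the corollary is a bookkeeping statement matching Lemma~2 against Lemma~1. The only point requiring a moment of care, and what I would flag as the main subtlety rather than an obstacle, is ensuring that the quantity being bounded in Lemma~1 (the total variance of the parameters $t_i$, translated into the Hilbert-Schmidt risk via Eq.~\eqref{eq:risk_variance}) is the same quantity being computed in Lemma~2, and that unbiasedness of $\hat{\rho}_T$ is genuinely needed to invoke Lemma~1. Once these identifications are made, the corollary is immediate.
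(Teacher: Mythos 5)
Your proposal is correct and matches the paper's (implicit) reasoning exactly: the corollary is read off by comparing the achieved risk in Eq.~\eqref{eq:risk_bsm_di} with the bound in Eq.~\eqref{eq:QFI}, with the collective-measurement claim handled by the additivity $\mathcal{F}_{N,\bm{\theta}} = N\mathcal{F}_{1,\bm{\theta}}$ already quoted in Sec.~\ref{sec:QCR}. Your flagged subtleties (unbiasedness of $\hat{\rho}_T$ and the identification of the bounded quantity with the Hilbert-Schmidt risk via Eq.~\eqref{eq:risk_variance}) are exactly the right points of care, and the word ``separable'' here correctly refers to separability across the $N$ copies, as in the paper.
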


Compared to direct inversion or maximum-likelihood estimation, Bayesian mean estimation introduces an additional dependence on a prior distribution $\pi(\theta)$. While the results in this section are derived using the uniform prior $\Theta \sim \Dir{\bm{1}_4}$, the same analysis can be done for other Dirichlet distributions $\Dir{\bm{\alpha}}$, such as the Jeffreys prior.

\begin{lemma}[BME, Bell state meas.]
    Given the outcome of $N$ Bell state measurements of the Bell diagonal state $\rho_0 = \Sigma \theta_i \Psi_i$ and assuming a uniform prior $\pi(\theta)$ on the simplex $\bm{\Theta}$, the risk in terms of the Hilbert-Schmidt distance for the Bayesian mean estimator $\hat{\rho}_B$ is
    \begin{equation}
    \label{eq:risk_bsm_b}
        R\!\left[\rho_0(\bm{\theta}), \hat{\rho}_B\right] \!=\! \frac{N\!\left(\!1\!-\!\sum_{i=1}^4 \theta_i^2 \right)\!+\!4\!\left(4 \sum_{i=1}^4 \theta_i^2\!-\!1\!\right)}{\left(4\!+\!N \right)^2}
    \end{equation}
    and the average risk for a uniformly distributed $\bm{\Theta}$ is,
    \begin{equation}
    \label{eq:avg_risk_bsm_b}
        r( \pi, \hat{\rho}_B) = \frac{3}{5( N + 4)}.
    \end{equation}
\end{lemma}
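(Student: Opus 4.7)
The plan is to exploit Dirichlet-multinomial conjugacy. Since the counts $\bm X$ arising from $N$ Bell state measurements are multinomial with parameters $(N,\bm\theta)$, and the uniform prior on the simplex $\bm\Theta$ is the Dirichlet distribution $\Dir{\bm 1_4}$, the posterior obtained from Eq.~\eqref{eq:BayesTheorem} is $\Dir{\bm 1_4+\bm x}$. Taking means componentwise, the Bayesian mean estimator is
\begin{equation}
    \hat{\theta}_i(\bm x) = \frac{x_i+1}{N+4},
\end{equation}
so $\hat{\rho}_B(\bm x)=\sum_i \hat{\theta}_i(\bm x)\Psi_i$. This is the key input; I would state it with a one-line reference to Dirichlet-multinomial conjugacy.

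Next, I would compute the risk of Eq.~\eqref{eq:HS_risk} directly, noting that $\hat{\rho}_B$ is biased, so Eq.~\eqref{eq:risk_variance} does not apply and one must use the full bias-variance decomposition
\begin{equation}
    \Exc{X|\bm\theta}{(\hat{\theta}_i-\theta_i)^2} = \Var{\hat{\theta}_i} + \bigl(\Ex{}{\hat{\theta}_i}-\theta_i\bigr)^2.
\end{equation}
Since $X_i$ is binomial$(N,\theta_i)$, the bias of $\hat{\theta}_i$ is $(1-4\theta_i)/(N+4)$ and its variance is $N\theta_i(1-\theta_i)/(N+4)^2$. Summing the four componentwise mean-square errors, the terms linear in $\theta_i$ collapse using $\Sigma_i\theta_i=1$, leaving the expression in Eq.~\eqref{eq:risk_bsm_b}. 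This is a routine calculation; the only subtle point is remembering the bias correction, which is exactly the difference from Lemma~2.

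For the average risk I would integrate Eq.~\eqref{eq:risk_bsm_b} against the uniform prior using Eq.~\eqref{eq:avg_risk_calc}. The only nontrivial ingredient is the second moment $\Exb{\pi}{\sum_i\theta_i^2}$ under $\Dir{\bm 1_4}$, which from the standard Dirichlet formula $\Ex{}{\theta_i^2}=\alpha_i(\alpha_i+1)/(\alpha_0(\alpha_0+1))$ with $\alpha_i=1$, $\alpha_0=4$ equals $1/10$ per component, so the sum is $2/5$. Substituting into Eq.~\eqref{eq:risk_bsm_b} gives
\begin{equation}
    r(\pi,\hat{\rho}_B) = \frac{N(1-2/5)+4(8/5-1)}{(N+4)^2} = \frac{3(N+4)/5}{(N+4)^2} = \frac{3}{5(N+4)},
\end{equation}
matching Eq.~\eqref{eq:avg_risk_bsm_b}. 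The main obstacle is essentially bookkeeping: identifying the correct conjugate posterior and not reflexively invoking Eq.~\eqref{eq:risk_variance} once the BME shrinks the estimate toward the uniform distribution and thereby acquires bias.
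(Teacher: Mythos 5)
Your proposal is correct and follows essentially the same route as the paper: Dirichlet--multinomial conjugacy yields the posterior $\Dir{\bm{x}+\bm{1}_4}$ and hence $\hat{\theta}_i=(x_i+1)/(N+4)$, after which the risk is evaluated from the binomial moments of $X_i$. The only cosmetic difference is that the paper substitutes $\Ex{}{X_i}$ and $\Ex{}{X_i^2}$ directly into Eq.~\eqref{eq:HS_risk} rather than writing out the explicit bias--variance split, but the computation is the same.
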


\begin{proof}
The uniform prior $\pi(\theta)$ on the simplex $\bm{\Theta}$ corresponds to the symmetric Dirichlet distribution $\Dir{\bm{1}_4}$~\cite{bernard2005}. Dirichlet distributions are conjugate priors for the multinomial distribution, meaning that if $(\!\bm{X}|\bm{\theta})$ follows a multinomial distribution and $\bm{\Theta} \sim \Dir{\bm{\alpha}}$, then the posterior $(\bm{\Theta}|\bm{x})$ follows a $\Dir{\bm{\alpha} + \bm{x}}$ distribution~\cite{frigyik2010introduction}. Since the number of measurement outcomes $\bm{X}$ from $N$ Bell state measurements is distributed multinomially and $\pi(\theta)$ is the symmetric Dirichlet prior, we find that the posterior distribution $p(\bm{\theta}|\bm{x})$ corresponds to the probability density function of the distribution $\Dir{\bm{x} + \bm{1}_4}$, which is given by
\begin{equation}
    p(\bm{\theta}|\bm{x}) = \frac{1}{B(\bm{x} + \bm{1}_4)} \prod_{i=1}^4 \theta_i^{x_i},
\end{equation}
where $B(\bm{x} + \bm{1}_4) = \left[\prod_i \Gamma(x_i+1)\right] /\Gamma(N+4)$~\cite{frigyik2010introduction}.
From the posterior distribution, we can calculate the Bayesian mean estimate $\hat{\bm{\theta}}$ according to
\begin{equation}
    \hat{\theta}_i(\bm{x}) = \int_{\bm{\Theta}} \theta_i \, p(\bm{\theta}|\bm{x}) \, d\bm{\theta} = \frac{x_i + 1}{N+4}.
\end{equation}
To evaluate Eq.~\eqref{eq:HS_risk} we use that since $X_i$ is distributed binomially with the parameters $N$ and $p=\theta_i$, $\Ex{}{X_i} = N \theta_i$ and $\Ex{}{X_i^2} = N \theta_i + (N^2 - N) \theta_i^2$. Simplifying and using $\Sigma_i \theta_i = 1$ leads to the expression for the risk. The average risk is then calculated according to Eq.~\eqref{eq:avg_risk_calc}.
\end{proof}

In the regime $N < 16$, the risk is minimal for the completely mixed state and maximal for the Bell states. This occurs because the mean over the prior $\int \rho(\bm{\theta}) d\pi(\bm{\theta}) = \frac{1}{4} \Id$ corresponds to the completely mixed state. For smaller $N$, the choice of the prior becomes more influential on the Bayesian mean estimate. Notably, unlike the direct inversion estimate $\hat{\rho}_T$, the Bayesian mean estimate $\hat{\rho}_B$ is always of full rank for finite $N$, since $\hat{\theta}_i > 0$ for all $\bm{x} \in \bm{X}$. For large $N$, the asymptotic behavior of the risk aligns with that of direct inversion in Eq.~\eqref{eq:avg_risk_bsm_di}, as visible in Fig.~\ref{fig:figure2}.

\subsection{Separable measurements}
Implementing a Bell state measurement on a bipartite state shared between parties A and B involves executing a nonlocal two-qubit operation. When these parties are spatially separated, e.g., two nodes of a quantum network, performing nonlocal operations necessitates not only classical communication and local gates but also preshared entanglement~\cite{Eisert2000}. If the resources to perform nonlocal two-qubit gates are not available, the alternative for estimating the shared state is to use measurements based on local operations and classical communication (LOCC).

To find an optimal set of separable measurements, we begin by discussing their measurement statistics on maximally entangled states. Let $P, Q  \in \text{Pos}(\mathbb{C}^{2})$ be rank-$1$ projection operators and consider $\rho_{AB}$, a maximally entangled pure state in the space $\mathbb{C}^{2} \otimes \mathbb{C}^2$. Consequently, $\Trr{A}{\rho_{AB}} = \Trr{B}{\rho_{AB}} = \Id/2$~\cite{Goyeneche2015}. In this case, we obtain the following upper bound on the expectation value of the separable operator $P_A \otimes Q_B$
\begin{equation}
\begin{aligned}
    \Tr{(P_A \otimes Q_B) \rho_{AB}} &\leq \Tr{(P_A \otimes \Id_B) \rho_{AB}} \\
    &= \Trr{A}{P_A \frac{\Id}{2}} = \frac{1}{2}.
\end{aligned}
\end{equation}
This bound directly relates to the fact that the marginals of $\rho_{AB}$ are completely mixed, meaning the marginal outcome probabilities contain no information about the maximally entangled state.

It is not possible to improve this upper bound by conducting collective separable rank-$1$ measurements $R, S \in \text{Pos}(\mathbb{C}^{2^n})$ on the state $\rho_{AB}^{\otimes n}$. Following similar steps to those above, we obtain $\tr{(R_A \otimes S_B) \rho_{AB}^{\otimes n}} \leq \frac{1}{2^n}$, which corresponds to the bound for $R = P^{\otimes n}$ and $S = Q^{\otimes n}$.

These considerations also apply to Bell diagonal states, as they are a convex sum of maximally entangled states.

\subsubsection{Pauli parity check}
\label{sec:ParityChecks}
We now turn explicitly to measurements exploiting the specific structure of Bell states $\rho(\bm{t})$ in Eq.~\eqref{eq:pauli_rep}. We express the rank-$1$ projection operators $P$ and $Q$ in the Bloch representation as $P = \frac{1}{2}(\Id + \bm{p} \cdot \bm{\sigma})$ and $Q = \frac{1}{2}(\Id + \bm{q} \cdot \bm{\sigma})$, where $\bm{p}, \bm{q}\in\mathbb{R}^3$, $\abs{\bm{p}} = \abs{\bm{q}} = 1$, and $\bm{\sigma} = [\sigma_1, \sigma_2, \sigma_3]^\intercal$. Using the Pauli representation of Eq.~\eqref{eq:pauli_rep}, we find
\begin{equation} \label{eq:trace-sep-general}
    \Tr{(P \otimes Q) \rho(\bm{t})} = \frac{1}{4}\left(1 + \sum_{i=1}^3 p_i q_i t_i\right).
\end{equation} 
The only $\mathbf{p}$ and $\mathbf{q}$ for which the outcome probabilities take on the extreme values $0$ or $1/2$ for all four Bell states are those with $\mathbf{p} = \pm \mathbf{q}$ and $\mathbf{p}$ (or $-\mathbf{p}$) being a standard basis vector in $\mathbb{R}^3$. This corresponds to measuring both qubits in the same Pauli basis. When measuring both qubits in the eigenbasis of $\sigma_i$, the probability of finding both qubits in the up eigenstate is the same as finding both qubits in the down eigenstate and, similarly, the probabilities of up-down and down-up eigenstates are identical. Consequently, we can interpret this measurement as a parity check in the basis $i$ with POVM elements $1/2 (\Id + \sigma_i \otimes \sigma_i)$ and $1/2 (\Id - \sigma_i \otimes \sigma_i)$. We refer to these measurements as Pauli parity checks.

Furthermore, for a general state $\rho$, represented as in Eq.~\eqref{eq:gen_state_pauli_rep} by $\bm{a}$, $\bm{b}$, and $\bm{T}$, the outcome probabilities 
\begin{equation}
    \begin{aligned}
        \Tr{1/2 (\Id + \sigma_i \otimes \sigma_i) \rho} &= 1/2 (1 + T_{ii})\\
        \Tr{1/2 (\Id - \sigma_i \otimes \sigma_i) \rho} &= 1/2 (1 - T_{ii})
    \end{aligned}
\end{equation} 
only depend on the diagonal elements of $\bm{T}$. Therefore, Pauli parity checks allow the estimation of the diagonal entries of any two-qubit density matrix represented in the Bell basis.

\subsubsection{Distinguishability of Bell diagonal states}

The precision of parameter estimation is fundamentally tied to how distinguishable quantum states are. The more sensitive a state is to changes in a parameter, the easier it is to differentiate between nearby states. This effect is quantifiable in terms of the quantum Fisher information \cite{BraunsteinCaves_PRL94}. Here we consider pairs of Bell diagonal states and ask if the optimal measurement to distinguish them is locally implementable. 

An upper bound for the success probability of distinguishing a pair of states $\rho$ and $\phi$ is given by the Helstrom bound~\cite{Helstrom1969}:
\begin{equation}
    p \leq \frac{1}{2} \left(1 + \frac{1}{2} \tr{\abs{\rho - \phi}}\right).
\end{equation}
The measurement that achieves this bound is a projective measurement onto the positive and negative part of the Helstrom matrix $\rho - \phi$~\cite{Puchala2016}.

For $\rho = \Sigma_i \rho_i \Psi_i$ and $\phi = \Sigma_i \phi_i \Psi_i$ being Bell diagonal states, the matrix $\rho - \phi$ is diagonal in the Bell basis. Consequently, the optimal POVM corresponds to
\begin{equation}
    M = \left\{ \sum_{i: \rho_i - \phi_i > 0} \Psi_i, \sum_{i: \rho_i - \phi_i \leq 0} \Psi_i\right\}.
\end{equation}
This POVM is straightforwardly implementable with Bell state measurements. More interesting is the case where we only consider LOCC implementable measurements. The even mixture of two Bell states $1/2(\Psi_i + \Psi_j)$, with $j \neq i$, is separable, as illustrated in Fig.~\ref{fig:figure1}. Therefore, POVMs of the form $M = \{\Psi_i + \Psi_j, \Psi_k + \Psi_l\}$, where $i$, $j$, $k$, and $l$ are required to be different to ensure the elements sum up to the identity, are LOCC implementable. The respective measurements to implement these POVMs correspond to the parity check measurements described in the preceding section. On the other hand, POVMs of the form $M = \{\Psi_i + \Psi_j + \Psi_k, \Psi_l\}$ are not LOCC implementable as they discriminate one Bell state from the other three Bell states, which is known to be not possible for LOCC implementable measurements~\cite{Ghosh2001}. Consequently, a LOCC implementable POVM that optimally distinguishes the Bell diagonal states $\rho$ and $\phi$ exists if $\abs{\{ \, i \, | \, \rho_i - \phi_i > 0 \}} \leq 2$ and $\abs{\{ \, i \, | \, \rho_i - \phi_i < 0 \}} \leq 2$. This is always the case for $\text{rank}(\rho - \phi) \leq 3$. Therefore, the convex sets spanned by three Bell diagonal states, in the geometric representation of Fig.~\ref{fig:figure1}, represented by the faces of the tetrahedron, are sets of states where each pair within the set can be optimally distinguished by separable measurements. For $\text{rank}(\rho - \phi) = 4$, this is not necessarily the case. An example of two Bell diagonal states that are not optimally distinguishable with LOCC is the states $\rho = \Id / 4$ and $\phi = 2/5 \Psi_1 + 1/5 (\Psi_2 + \Psi_3 + \Psi_4)$. This example also demonstrates that the separability of both states, $\Sigma \abs{t_i} \leq 1$, does not imply optimal distinguishability with separable measurements. 

Furthermore, for a pair of locally optimally distinguishable states $\{ \rho, \sigma \}$, any convex combinations of these states, given by $\psi = p \rho + (1 - p) \sigma$ and $\phi = q \rho + (1 - q) \sigma$, with $p, q \in [0, 1]$, remain optimally distinguishable using local measurements. This follows from the fact that their Helstrom matrix is simply a scaled version of the original one $\psi - \phi = (p - q)(\rho - \sigma)$.

Let us say we use the locally implementable Helstrom measurement to estimate such states. While it is clear that this measurement is not optimal for estimating these states, one could quantify how good it is by defining a ratio between the classical Fisher information for the Helstrom measurement and the quantum Fisher information. Quantum Fisher information is the maximum Fisher information achievable and thus this ratio would capture how far off we are from the optimal measurement.

\subsubsection{Estimating Bell diagonal states}

\begin{figure}
  \centering
  \includegraphics[width=\columnwidth]{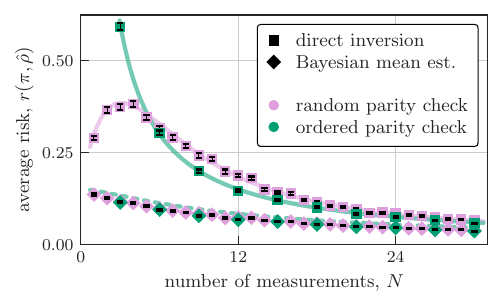}
  \caption{Average risk in terms of the Hilbert-Schmidt distance over a uniform prior for Pauli parity checks plotted against the number of measurements $N$ for Bayesian mean estimation and direct inversion. The solid lines correspond to the analytical results of Eqs.~\eqref{eq:avg_risk_dirinv} and \eqref{eq:avg_risk_dirinv_ordered} and green dashed line represents the upper bound of Eq.~\eqref{eq:bound_BME_sep}. Each data point corresponds to the average over $1000$ samples. For each sample, the true state $\rho_0$ is drawn uniformly and $N$ Pauli parity check measurements are simulated. To approximate the posterior distribution in the Bayesian mean estimation, the state space is discretized with $10^4$ states.}
  \label{fig:figure3}
\end{figure}

The POVM that corresponds to randomly selecting a basis $x$, $y$, or $z$ and performing a Pauli parity check in that basis can be expressed as
\begin{equation}
\label{eq:Pauli_BDS}
   \mathcal{M} = \left\{ \frac{1}{6} \left[\Id + (-1)^j \sigma_i \otimes \sigma_i \right] \right\}_{i\in\{1,2,3\}, j\in\{0,1\}},
\end{equation}
where the indices $i$ correspond to the basis of the parity check and $j$ to even and odd parity. We refer to this measurement as random Pauli parity checks. The number of outcomes $M_{i, j}$ obtained from $N$ measurements of the state $\rho(\bm{t})$ is denoted by $X_{i,j}$. Furthermore, we introduce the random variable $Y_i = X_{i, 0} + X_{i, 1}$ that represents the number of times the POVM outcome provided information about the parity in basis $i$; $Y_i$ is binomially distributed with the parameters $N$ and $p_i = 1/3$.

\begin{lemma}[Direct inversion, parity checks]
    Given the outcome of $N$ random Pauli parity check measurements of the Bell diagonal state $\rho_0 = \Sigma_i \theta_i \Psi_i = 1/4 (\Id \otimes \Id + \Sigma_i t_i \sigma_i \otimes \sigma_i)$, consider the direct inversion estimator
    \begin{equation}
    \label{eq:directInversion_PauliBDS}
      \hat{t}_i =\begin{cases}
        0, & \text{if $y_{i} = 0$}\\
        \frac{2 x_{i, 0}}{y_{i}} - 1, & \text{else},
      \end{cases}
    \end{equation}
    where $y_i$ denotes the number of measurements in the basis $i$ and $x_{i,0}$ the number of measurements in the basis $i$ with outcome $0$. The risk $R\left(\rho_0(\bm{\theta}), \hat{\rho}_T\right)$ of the estimator in terms of the Hilbert-Schmidt distance is
    \begin{equation}
    \label{eq:risk_inv_exact}
        R = \biggl( \frac{2}{3}\biggr)^{ N} \left\{ \biggl(1 - \sum_{i=1}^4 \theta_i^2 \biggr) \left[ \frac{N}{2} g(N) - 1 \right] + \frac{3}{4}\right\} ,
    \end{equation}
    where the function $g(N)$ is the generalized hypergeometric function ${}_{3}F_{2} (1, 1, 1-N; 2, 2; -1/2)$.
    The average risk for a uniformly distributed $\bm{\Theta}$ is
    \begin{equation}
    \label{eq:avg_risk_dirinv}
        r\left(\pi, \hat{\rho}_T\right) = \left(\frac{2}{3}\right)^{ N}\left[\frac{N}{5} g(N) + \frac{3}{20}\right].
    \end{equation}
\end{lemma}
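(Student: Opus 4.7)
The plan is to compute $R$ by conditioning on the number of measurements that land in each basis, turn the resulting binomial-weighted harmonic sum into the hypergeometric $g(N)$, and finally average over the Dirichlet prior. First I would set up the conditional structure of the measurement: since each POVM outcome falls in one of the three Pauli bases with probability $1/3$, the count $Y_i$ is binomial with parameters $N$ and $1/3$, and conditional on $Y_i = y > 0$ the even-parity count $X_{i,0}$ is binomial with parameters $y$ and $(1+t_i)/2$. The conditional estimator $\hat{t}_i = 2X_{i,0}/y - 1$ is therefore unbiased with conditional mean-square error $(1-t_i^2)/y$, while on the event $Y_i = 0$ it is zero and contributes squared error $t_i^2$. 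Averaging over $Y_i$ gives
\begin{equation*}
\Exc{X|\bm{\theta}}{(\hat{t}_i - t_i)^2} = \left(\frac{2}{3}\right)^{\!N}\! t_i^2 + (1 - t_i^2)\, S(N),
\end{equation*}
where $S(N) := \sum_{y=1}^{N} \binom{N}{y}(1/3)^y (2/3)^{N-y}/y$.

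Next I would translate to the $\bm{\theta}$ parametrization using Eq.~\eqref{eq:risk_variance} together with the identity $\sum_{i=1}^3 t_i^2 = 4 \sum_{i=1}^4 \theta_i^2 - 1$, which follows from equating $\tr{\rho_0^2}$ computed from Eqs.~\eqref{eq:spectral_decomposition} and \eqref{eq:pauli_rep}. Summing the per-component errors and collecting yields a closed form for $R$ that is affine in $\sum_i \theta_i^2$, with coefficients $(2/3)^N$ and $S(N)$.

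The main obstacle is identifying $S(N)$ with $(N/2)(2/3)^N g(N)$. I would use $1/y = \int_0^1 u^{y-1} du$ to recast the harmonic sum as
\begin{equation*}
S(N) = \int_0^1 \frac{1}{u}\left[\left(\frac{u+2}{3}\right)^{\!N} - \left(\frac{2}{3}\right)^{\!N}\right] du,
\end{equation*}
then rescale via $u = 2v$ to factor out $(2/3)^N$ and match the Euler-type integral ${}_3F_2(1,1,1-N;2,2;z) = \int_0^1 {}_2F_1(1,1-N;2;zt)\, dt$ at $z = -1/2$. Using the Vandermonde-style closed form ${}_2F_1(1, 1-N; 2; x) = [1 - (1-x)^N]/(Nx)$ (which comes from the telescoping identity $\binom{N-1}{k}/(k+1) = \binom{N}{k+1}/N$) completes the identification. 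Regrouping terms then casts the risk in the form of Eq.~\eqref{eq:risk_inv_exact}.

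Finally, for the average risk I would evaluate $\mathbb{E}\!\left[\sum_i \theta_i^2\right]$ under $\bm{\Theta} \sim \Dir{\bm{1}_4}$, using $\mathbb{E}[\theta_i^2] = \alpha_i(\alpha_i + 1)/[\alpha_0(\alpha_0 + 1)] = 2/20 = 1/10$ so that $\mathbb{E}\!\left[\sum_i \theta_i^2\right] = 2/5$, and substitute into the per-state risk to obtain Eq.~\eqref{eq:avg_risk_dirinv}. The dominant technical content is the hypergeometric identification in step three; the rest is conditional-variance bookkeeping and a Dirichlet moment.
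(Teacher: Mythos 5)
Your proposal is correct and follows essentially the same route as the paper's proof: condition on the basis counts $Y_i\sim\mathrm{Bin}(N,1/3)$, use that $X_{i,0}\mid Y_i=y$ is $\mathrm{Bin}\!\left(y,(1+t_i)/2\right)$ to obtain the conditional mean-square error $(1-t_i^2)/y$ (and $t_i^2$ on $\{Y_i=0\}$), average over $Y_i$, and convert to $\bm{\theta}$ via $\sum_i t_i^2=4\sum_i\theta_i^2-1$. The one place you add genuine content is the identification $S(N)=\tfrac{N}{2}(2/3)^N g(N)$: the paper asserts it without proof, whereas your route via $1/y=\int_0^1 u^{y-1}\,du$ and ${}_2F_1(1,1-N;2;x)=[1-(1-x)^N]/(Nx)$ actually establishes it (the index shift $\binom{N-1}{k}/(k+1)=\binom{N}{k+1}/N$ applied directly to the ${}_3F_2$ series gives the same identity with no integral at all, and the substitution $u=2v$ you mention is unnecessary since $((u+2)/3)^N=(2/3)^N(1+u/2)^N$ factors immediately).

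One caveat on your final step: carrying out the substitution $\mathbb{E}\!\left[\sum_i\theta_i^2\right]=2/5$ into Eq.~\eqref{eq:risk_inv_exact} yields
\begin{equation*}
r=\left(\tfrac{2}{3}\right)^{N}\left[\tfrac{3}{5}\left(\tfrac{N}{2}g(N)-1\right)+\tfrac{3}{4}\right]=\left(\tfrac{2}{3}\right)^{N}\left[\tfrac{3N}{10}\,g(N)+\tfrac{3}{20}\right],
\end{equation*}
whose leading coefficient is $3N/10$, not the $N/5$ printed in Eq.~\eqref{eq:avg_risk_dirinv}. Both a direct check at $N=1$ (where $g(1)=1$ and the exact average risk is $3/10$, versus $7/30$ from the printed formula) and the large-$N$ limit (your expression tends to $9/(5N)$, matching the ordered parity checks as the paper itself states it should, while the printed formula tends to $6/(5N)$) confirm that your derivation is the consistent one and that Eq.~\eqref{eq:avg_risk_dirinv} contains a typo. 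So do not claim to ``obtain'' the printed equation verbatim; report the corrected coefficient.
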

\begin{proof}
To calculate the risk, we start by finding an expression for $\Ex{\bm{X}|\bm{Y}}{\hat{t}_i - t_i}^2$,
\begin{equation}
\label{eq:dirinv_random_pc}
  \Exb{\bm{X}|\bm{Y}}{\left( \hat{t}_i - t_i \right)^2} =\begin{cases}
    t_i^2, & \text{if $y_{i} = 0$}\\
    \frac{1 - t_i^2}{y_i} & \text{else},
  \end{cases}
\end{equation}
where the expression for the case $y_i \neq 0$ is obtained by the observation that for a fixed $y_i$, $X_i$ is binomially distributed with the parameters $y_i$ and $p = (1 + t_i)/2$. Given the binomial distribution of $Y_i$ we get
\begin{equation}
    \begin{aligned}
        &\Exc{\bm{Y}}{\Exb{\bm{X}|\bm{Y}}{\left( \hat{t}_i - t_i \right)^2}} = \left( \frac{2}{3} \right)^{ N} t_i^2 \\ & \quad + \sum_{y=1}^N \binom{N}{y} \left( \frac{1}{3} \right)^{ y} \left( \frac{2}{3} \right)^{ N-y} \frac{1 - t_i^2}{y}\\
        &= \left( \frac{2}{3} \right)^{ N} t_i^2 + \frac{1 - t_i^2}{2} N \left(\frac{2}{3}\right)^{ N} g(N),
    \end{aligned}
\end{equation}
where the function $g(N)$ is the generalized hypergeometric function ${}_{3}F_{2} (1, 1, 1-N; 2, 2; -1/2)$.

Finally, we use $\sum_i \ex{( \hat{t}_i - t_i )^2} = 4 \sum_i \ex{( \hat{\theta}_i - \theta_i )^2}$ and $\sum_{i=1}^3 t_i^2 = 4 \sum_{i=0}^3 \theta_i^2 - 1$ and insert the expression into Eq.~\eqref{eq:HS_risk} to reach the equation in question. The average risk is then calculated according to Eq.~\eqref{eq:avg_risk_calc}.
\end{proof}

In Fig.~\ref{fig:figure3}, an interesting feature of the average risk function for random parity checks becomes apparent. For the first few measurements, the average risk increases before it eventually starts to decrease. This behavior can be understood by analyzing Eq.~\eqref{eq:dirinv_random_pc}. Calculating the average risk over a uniform prior $\pi(\bm{t})$, we find from $\Ex{}{t_i^2} = 1/5$ that $\Ex{}{t_i^2} \leq \Exb{}{(1-t_i^2)/y_i}$ for $y_i \leq 4$. Therefore, the average risk for the estimate $t_i$ is smaller when there was no parity check in basis $i$ than when there was one, two, or three. The average risk for the estimate $\hat{t}_i$ is the largest for $y_i = 1$.

A simpler expression for the risk is reached with the additional assumption that $y_i = N/3$, which corresponds to the scenario where we measure equally many times in the $x$, $y$, and $z$ basis. We refer to this measurement strategy as ordered Pauli parity checks.

\begin{lemma}[Direct inversion, ordered parity checks]
    Given the outcome of $N/3$, $N \in 3\mathbb{N}$, Pauli parity check measurements of the Bell diagonal state $\rho_0 = \Sigma \theta_i \Psi_i$ in each of the three directions $x$, $y$, and $z$, the risk in terms of the Hilbert-Schmidt distance of the direct inversion estimator $\hat{\rho}_T$ is
    \begin{equation}
        \label{eq:risk_p_bds_di}
        R\left[\rho_0(\bm{\theta}), \hat{\rho}_T\right]= \frac{3 - 3 \sum_{i=1}^4 \theta_i^2}{N}.
    \end{equation}
    The average risk for a uniformly distributed $\bm{\Theta}$ is,
    \begin{equation}
    \label{eq:avg_risk_dirinv_ordered}
        r\left(\pi, \hat{\rho}_T\right) = \frac{9}{5N}.
    \end{equation}
\end{lemma}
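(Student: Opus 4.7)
The plan is to exploit the fact that with ordered parity checks the counts in each basis are deterministic, $y_i = N/3$, so the estimator reduces to a simple rescaled binomial and the analysis becomes a short direct computation rather than the hypergeometric-series manipulation needed for the randomized version.

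First I would observe that conditional on performing $N/3$ measurements in the basis $i$, the count $X_{i,0}$ of even-parity outcomes is distributed as $\mathrm{Binomial}(N/3,\, (1+t_i)/2)$, since the probability of an even-parity outcome in basis $i$ is $(1+t_i)/2$ by Eq.~\eqref{eq:trace-sep-general} evaluated on the parity-check POVM. The estimator in Eq.~\eqref{eq:directInversion_PauliBDS} specializes to $\hat{t}_i = 6 x_{i,0}/N - 1$, which is manifestly unbiased ($\mathbb{E}[\hat{t}_i] = t_i$). A direct computation of the binomial variance then gives
\begin{equation*}
    \mathrm{var}(\hat{t}_i) = \left(\tfrac{6}{N}\right)^2 \cdot \tfrac{N}{3} \cdot \tfrac{1+t_i}{2}\cdot\tfrac{1-t_i}{2} = \frac{3(1-t_i^2)}{N}.
\end{equation*}

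Next I would plug this into the risk formula of Eq.~\eqref{eq:risk_variance}, which for an unbiased estimator reads $R = \tfrac{1}{4}\sum_i \mathrm{var}(\hat{t}_i)$. Summing over $i=1,2,3$ gives $R = \tfrac{3}{4N}(3 - \sum_i t_i^2)$. To convert to the $\bm{\theta}$ parametrization I use the identity $\sum_{i=1}^{3} t_i^2 = 4\sum_{i=1}^{4}\theta_i^2 - 1$, which follows from squaring the second relation in Eq.~\eqref{eq:rel} and simplifying with $\sum_i\theta_i = 1$. This yields exactly Eq.~\eqref{eq:risk_p_bds_di}.

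For the average risk under the uniform prior, I would apply Eq.~\eqref{eq:avg_risk_calc}; the only nontrivial quantity is $\mathbb{E}[\sum_i \theta_i^2]$ under $\bm{\Theta}\sim\mathrm{Dir}(\bm{1}_4)$. Using the second-moment formula for the symmetric Dirichlet distribution, $\mathbb{E}[\theta_i^2] = \alpha_i(\alpha_i+1)/[\alpha_0(\alpha_0+1)] = 2/20 = 1/10$, so $\mathbb{E}[\sum_i\theta_i^2] = 2/5$, and the average risk becomes $(3/N)(1-2/5) = 9/(5N)$, matching Eq.~\eqref{eq:avg_risk_dirinv_ordered}. I do not foresee a substantive obstacle here; the main thing to be careful about is bookkeeping between the $\bm t$ and $\bm\theta$ parametrizations and the factor of $1/4$ relating their mean-square errors, both of which are already fixed by earlier equations in the paper.
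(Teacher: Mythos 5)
Your proposal is correct and follows essentially the same route as the paper's proof: identify $X_{i,0}$ as $\mathrm{Binomial}(N/3,(1+t_i)/2)$, compute $\Var{\hat{t}_i}=3(1-t_i^2)/N$ from unbiasedness, convert via $\sum_i t_i^2 = 4\sum_i\theta_i^2-1$, and average over the uniform Dirichlet prior. The only (welcome) addition is that you spell out the Dirichlet second-moment computation $\Ex{}{\theta_i^2}=1/10$, which the paper leaves implicit.
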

\begin{proof}
In the case of $N/3$ parity checks in basis $i$, $X_{i, 0}$ is binomially distributed with the parameters $N/3$ and $p_i = (1+t_i)/2$. From the expectation value $\Ex{}{X_{i, 0}} = N(1+t_i)/6$ we find the direct inversion estimate
\begin{equation}
    \hat{t}_i = \frac{6 x_{i, 0} - N}{N}.
\end{equation}
Due to $X_{i, 0}$ being distributed binomially $\Var{X_{i, 0}} = N(1 - t_i^2)/12$ and consequently $\Var{\hat{t}_i} = 3 (1 - t_i^2) / N$. As $\hat{\bm{t}}$ is unbiased, the risk can be calculated according to Eq.~\eqref{eq:risk_variance}. Using $\sum_{i=1}^3 t_i^2 = 4 \sum_{i=0}^3 \theta_i^2 - 1$ we reach the expression for the risk. The average risk is then calculated according to Eq.~\eqref{eq:avg_risk_calc}.
\end{proof}

For increasing $N$, the risk for the parity check where each basis is measured equally many times approaches the risk of the randomly chosen parity checks, as visible for the average risk in Fig.~\ref{fig:figure3}. Comparing the result of Eq.~\eqref{eq:risk_p_bds_di} to the risk obtained for direct inversion with Bell state measurements of Eq.~\eqref{eq:risk_bsm_di}, we see that for a fixed $N$ the risk is three times bigger for the LOCC implementable ordered Pauli parity checks than for the Bell state measurements. 

Next, we derive an upper bound on the average risk for the Bayesian mean estimator based on the measurement outcomes of $N$ ordered Pauli parity checks. To calculate the posterior distribution under a uniform prior, we must integrate the likelihood function over the set of all physical states, as visible in the denominator of Bayes' theorem in Eq.~\eqref{eq:BayesTheorem}. For Bell diagonal states expressed in the Pauli representation of Eq.~\eqref{eq:pauli_rep}, this requires integrating $\bm{t}$ over the tetrahedron illustrated in Fig.~\ref{fig:figure1}, which is generally challenging to perform analytically. To simplify, we introduce an approximation for the true posterior distribution by extending the domain to all states with $\bm{t} \in [-1, 1]^3$, thus covering both physical and nonphysical states. For this extended set, the integration boundaries are $[-1, 1]$ for each variable $t_i$, making calculations more tractable. Geometrically, this corresponds to integrating over the cube that contains the tetrahedron in Fig.~\ref{fig:figure1}. This approximation gets better the more the extended posterior distribution is localized within the set of physical states, which occurs the more the true state $\rho_0$ is mixed and the more measurements are conducted. Given that the Bayesian mean estimator minimizes average risk (as discussed in Sec.~\ref{sec:risk}), we use the average risk calculated over the extended posterior distribution to provide an upper bound on the average risk of the true Bayesian mean estimator.

\begin{lemma}[BME, ordered parity checks]
    Given the outcome of $N/3$, $N\in 3\mathbb{N}$, Pauli parity check measurements of the Bell diagonal state $\rho_0 = \Sigma \theta_i \Psi_i$ in each of the three directions $x$, $y$, and $z$, and a uniform prior $\pi(\theta)$ on the simplex $\bm{\Theta}$, the average risk in terms of the Hilbert-Schmidt distance for the Bayesian mean estimator $\hat{\rho}_B$ is upper bounded by
    \begin{equation}
    \label{eq:bound_BME_sep}
        r\left(\pi, \hat{\rho}_B\right) \leq \frac{N+3}{5 \left(\frac{N}{3} + 2\right)^2}.
    \end{equation}
\end{lemma}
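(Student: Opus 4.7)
The plan is to exploit the Bayesian optimality of $\hat{\rho}_B$ under the uniform prior $\pi$: any other estimator $\hat{\rho}'$ satisfies $r(\pi, \hat{\rho}_B) \leq r(\pi, \hat{\rho}')$, as remarked in Sec.~\ref{sec:risk}. I will take $\hat{\rho}'$ to be the BME computed under the tractable \emph{extended} prior that is uniform on the cube $[-1,1]^3$, which strictly contains the tetrahedron of physical states. This trades the intractable integration over the tetrahedron for an explicit closed-form estimator.

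First I will derive the extended estimator $\hat{\rho}_B^{ext}$. Because the cube prior is a product of three independent uniform distributions on $t_i$, and a Pauli parity check in basis $i$ is sensitive only to $t_i$, the problem factorizes into three identical one-dimensional problems. Reparametrizing $t_i = 2 p_i - 1$ turns each subproblem into a Beta$(1,1)$-binomial conjugate pair with $N/3$ trials, whose posterior is Beta$(x_{i,0}+1, N/3 - x_{i,0}+1)$, and the posterior mean gives $\hat{t}_i^{ext} = (2 x_{i,0} - N/3)/(N/3+2)$.

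Next I will compute the pointwise risk $R[\rho_0(\bm{t}), \hat{\rho}_B^{ext}]$ via the bias-variance decomposition $\mathrm{MSE} = \mathrm{Var} + \mathrm{Bias}^2$, noting that $\hat{t}_i^{ext}$ is biased: the bias equals $-2t_i/(N/3+2)$ and the variance equals $(N/3)(1-t_i^2)/(N/3+2)^2$, so $R$ reduces to a linear function of $\sum_i t_i^2$. Averaging over $\pi$ then only requires the single moment $\Ex{}{t_i^2}$, which for the symmetric Dirichlet prior $\Dir{\bm{1}_4}$ equals $1/5$ by the permutation symmetry of $\bm{\theta}$ and the linear relation Eq.~\eqref{eq:rel}. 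Substituting should collapse to $(N+3)/[5(N/3+2)^2]$.

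The main subtlety is conceptual rather than computational: one must check that the optimality step is legitimate even though $\hat{\rho}_B^{ext}$ may return a $\bm{t}$ outside the tetrahedron and hence a nonphysical density operator. This does not break the argument, because the Hilbert-Schmidt risk is well-defined on any pair of Hermitian operators and Bayes optimality of $\hat{\rho}_B$ on the simplex holds against every measurable estimator, not only those with physical output. Once this is observed, the remaining steps reduce to routine algebra.
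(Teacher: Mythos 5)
Your proposal is correct and follows essentially the same route as the paper: both upper-bound the risk of the true BME by the risk of an auxiliary estimator obtained from the posterior over the extended cube $\bm{t}\in[-1,1]^3$, arrive at the same closed-form estimate $\hat{t}_i=(2x_{i,0}-N/3)/(N/3+2)$ via Beta--binomial conjugacy, and average the resulting quadratic-in-$t_i$ risk over the uniform prior using $\Ex{}{t_i^2}=1/5$. Your explicit remark that Bayes optimality on the simplex holds against estimators with nonphysical output is a worthwhile clarification the paper leaves implicit, but it does not constitute a different argument.
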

\begin{proof} As the Bayesian mean estimator minimizes the average risk if a Bregman divergence serves as a loss function~\cite{Banerjee2005}, the average risk of any estimator upper bounds the average risk of the Bayesian mean estimator. In the following, the average risk is calculated for an adjusted Bayesian mean estimator, where the posterior distribution is calculated not on the set of physical Bell diagonal states, but on the states $\rho(\bm{t}) = 1/4(\Id + \Sigma t_i \sigma_i \otimes \sigma_i)$ with $t_i \in [-1,1]$.

Denoting by $x_{i, j}$ the number of measurement outcomes $j$ of the Pauli parity checks in basis $i$, the likelihood function $p[\bm{x}| \rho(\bm{t})]$ is given by
\begin{equation}
\begin{aligned}
\label{eq:llh-BME-PPC}
    p\left[\bm{x}|\rho(\bm{t})\right] &= \prod_{i=1}^3 \binom{N/3}{x_{i, 0}} \prod_{j=1}^2 \left[\frac{1 + (-1)^{j} t_i}{2}\right]^{x_{i,j}}\\
    &= \prod_{i=1}^3 \binom{N/3}{x_{i, 0}} \left(\frac{1 + t_i}{2}\right)^{x_{i,0}} \left(\frac{1 - t_i}{2}\right)^{x_{i,1}}.
    \end{aligned}
\end{equation}

To calculate the posterior over the prior $\pi(\bm{t}) = 1/8, \forall \bm{t} \in [-1, 1]^3$, we first evaluate
\begin{equation}
\label{eq:norm-BME-PPC}
    \begin{aligned}
    &p(\bm{x}) = \iiint p(\bm{x}|\rho) \pi(\rho) d\bm{t} \\ 
    &= \frac{1}{8} \prod_{i=1}^3 \binom{N/3}{x_{i, 0}} \int_{-1}^{1} \left(\frac{1 + t_i}{2}\right)^{x_{i,0}} \left(\frac{1 - t_i}{2}\right)^{x_{i,1}} dt_i \\
    &= \frac{1}{8} \prod_{i=1}^3 2 \binom{N/3}{x_{i, 0}} \frac{\Gamma(x_{i,0}+1) \, \Gamma(N/3-x_{i,0}+1)}{\Gamma(N/3+2)}.
    \end{aligned}
\end{equation}

\begin{figure*}[ht]
  \centering
  \includegraphics[width=\textwidth]{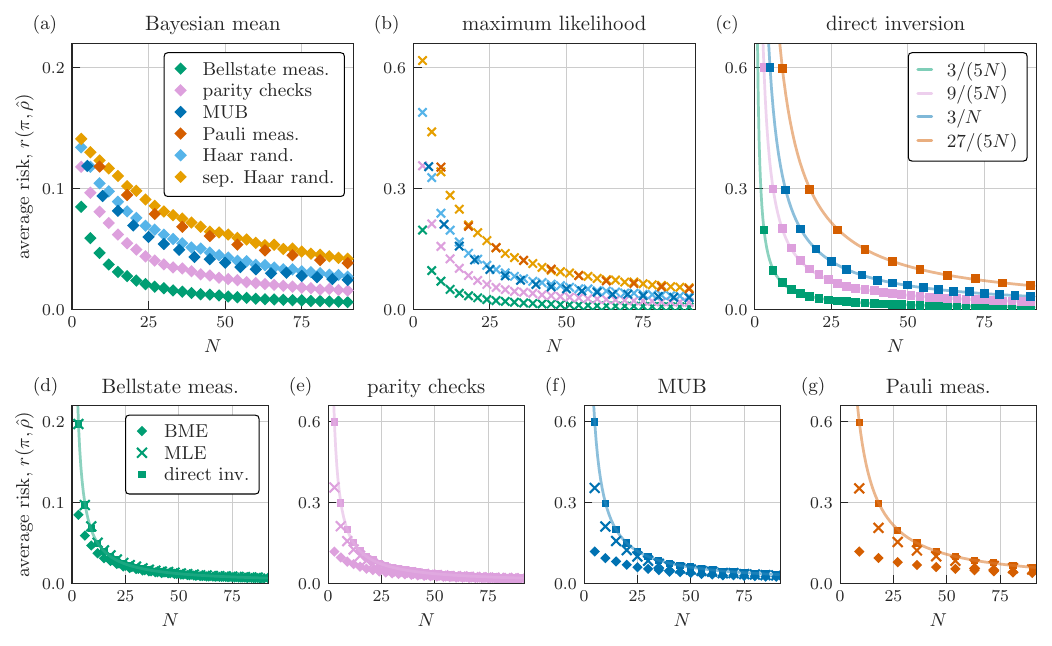}
  \caption{Average risk in terms of the Hilbert-Schmidt distance over a uniform prior plotted against the number of measurements $N$ for different types of estimators and different sets of measurements. The the risk of different measurements is compared for a fixed estimator: (a) Bayesian mean estimator, (b) maximum-likelihood estimator, and (c) direct inversion estimator. (d)-(g) Measurements are fixed to compare the estimators: (d) Bell state measurement, (e) parity checks, (f) MUB measurement, and (g) Pauli measurement. Each data point corresponds to an average of 4000 samples. The Pauli parity checks, Pauli measurements, and mutually unbiased basis measurements were conducted such that each of the three Pauli parity checks, five MUB measurements, and nine Pauli measurements was measured equally many times. To approximate the posterior distribution in the Bayesian mean estimation, the state space is discretized with $10^4$ states.}
  \label{fig:figure4}
\end{figure*}

Plugging Eqs.~\eqref{eq:llh-BME-PPC} and \eqref{eq:norm-BME-PPC} into Bayes' theorem of Eq.~\eqref{eq:BayesTheorem} leads to
\begin{equation}
    p\left[\rho(\bm{t})|\bm{x}\right] = \prod_{i=1}^3 \frac{\left(\frac{1+t_i}{2}\right)^{x_{i,0}}\left(\frac{1 - t_i}{2}\right)^{N/3 - x_{i,0}}}{\frac{\Gamma(x_{i,0}+1) \, \Gamma(N/3-x_{i,0}+1)}{\Gamma(N/3+2)}}.
\end{equation}

Replacing $(1 + t_i)/2$ with $z_i$ and $(1 - t_i)/2$ with $1 - z_i$, $p[\rho(\bm{z})|\bm{x}]$ is proportional to the probability density function of the product of three Beta distributions for three independent random variables $Z_i$ with the parameters $x_{i,0}+1$ and $x_{i,1}+1$. From the properties of the Beta distribution follows $\Ex{}{Z_i} = (x_{i,0}+1) / (N/3+2)$. As the estimate of the adjusted Bayesian mean estimator is the mean of the posterior, the estimates $\hat{t}_i$ correspond to
\begin{equation}
    \hat{t}_i = 2 \frac{x_{i, 0} + 1}{N/3 + 2} -1 = \frac{2 x_{i, 0} - N/3}{2 + N/3}.
\end{equation}
The risk is then calculated by evaluating
\begin{equation}
    R\left[\rho_0(\bm{t}), \hat{\rho}\right] = \frac{1}{4} \Exb{\bm{X}|\rho_0}{\sum_{i=1}^3 \left( \hat{t}_i(\bm{x}) - t_i\right)^2}.
\end{equation}
As $X_{i, 0}$ is distributed binomially with the parameters $N/3$ and $(1+t_i) / 2$, $\Ex{}{X_{i,0}} = N (1 + t_i) / 6$ and $\Ex{}{X_{i,0}^2} = N (1 - t_i^2)/12 + N^2 (1 + t_i)^2 / 36$. Furthermore, using $\Sigma t_i^2 = 4 \Sigma \theta_i^2 - 1$, we reach
\begin{equation}
    R\left[\rho_0(\bm{t}), \hat{\rho}\right] = \frac{N/3 -1 + (4 - N/3) \sum_{i=1}^4 \theta_i^2}{(N/2 +2)^2}.
\end{equation}

Taking the average of this risk over a uniform prior on the physical states leads to the average risk
\begin{equation}
    r = \frac{N+3}{5 (N/3 + 2)^2},
\end{equation}
which serves as an upper bound for the average risk of the true Bayesian mean estimator, where the prior and the posterior distribution are only over the set of physical states.
\end{proof}

As shown in Fig.~\ref{fig:figure3}, the calculated bound is relatively tight. For finite $N$, this upper bound is lower than the average risk for the direct inversion estimator. The bound approaches the average risk for the direct inversion estimator $r = 9 / (5N)$ for large $N$, as we would expect.

\subsection{Numerics}

In this section, we present the numerical results for the average risk of different sets of measurements and estimators. We compare them to the two sets of measurements, Bell state measurements and ordered parity checks, discussed analytically, and highlight some interesting observations. Beyond direct inversion and Bayesian mean estimation, we also include maximum-likelihood estimation (MLE), as it is one of the most popular choices to do state estimation (see Sec.~\ref{sec:estimation}). We consider six different sets of projective measurements. 
\begin{enumerate}[itemsep=0ex]
    \item Bell state measurements. These are measurements in the Bell basis as discussed in Sec.~\ref{sec:BellStateMeas}.
    \item Pauli parity checks. These are parity checks in the $x$, $y$, and $z$ basis as discussed in Sec.~\ref{sec:ParityChecks}. We focus on the ordered Pauli parity checks, where each basis is measured $N/3$ times.
    \item Mutually unbiased basis (MUB) measurements. For four dimensions, a set of five MUBs is informationally complete~\cite{wooters1989}. We use the set defined in Ref.~\cite{klappenecker2004}, where three bases are separable and two are maximally entangled. We measure in each of the five bases $N/5$ times.
    \item Pauli measurements. The measurement basis corresponds to the eigenbasis of the tensor product of two Pauli operators. We measure in each of the nine bases $N/9$ times.
    \item Haar random measurements. For each measurement, a unitary from the Haar measure on the unitary group of dimension $4$ is sampled. The measurement basis is then defined by the columns of the unitary.
    \item Separable Haar random measurements. Two unitaries from the Haar measure on the unitary group of dimension $2$ are sampled for each measurement. The measurement basis is then defined by the columns of the tensor product of the two unitaries. 
\end{enumerate}

In Figs.~\hyperref[fig:figure4]{\ref{fig:figure4}a}-\hyperref[fig:figure4]{\ref{fig:figure4}c} we compare different types of measurements for each of the three estimators. As expected, Bell state measurements yield the lowest average risk among the tested sets for all estimators. The second-best results are achieved by the ordered Pauli parity checks. Interestingly, measurements in the MUBs perform similarly to Haar-random measurements, while the results of Pauli measurements are similar to those of separable Haar-random measurements. Figure~\ref{fig:figure5} compares the numerical results from Fig.~\hyperref[fig:figure4]{\ref{fig:figure4}a} with those obtained when the loss function is defined by the infidelity instead of the Hilbert-Schmidt distance. The relative performance of the measurements remains unchanged, and no fundamental change in behavior is observed.

As evident from Eq.~\eqref{eq:trace-sep-general}, if the qubits are measured in different Pauli bases, all outcomes are equally likely and therefore independent of the true state. This means that, of the nine Pauli measurements, only three provide nontrivial information. These three measurements are the same as those performed in the Pauli parity check. Consequently, we observe in Fig.~\hyperref[fig:figure4]{\ref{fig:figure4}c} that the average risk for a fixed $N$ for the Pauli measurements is three times as large as for the ordered Pauli parity checks. The Pauli measurements are also significantly worse than the Pauli parity checks for the Bayesian mean estimator in Fig.~\hyperref[fig:figure4]{\ref{fig:figure4}a} and the maximum-likelihood estimator in Fig.~\hyperref[fig:figure4]{\ref{fig:figure4}b}. However, in these cases, the factor is no longer $3$, as the estimates are also influenced by the prior distribution and the constraint that the estimated state must be physical. A similar situation occurs for measurements in the mutually unbiased bases. Three out of the five measurements coincide with the Pauli parity checks, while the other two measurements provide no insight. Hence, the average risk for MUBs measurements of the direct inversion estimator differs by a factor of $5/3$ from the average risk obtained for the Pauli parity checks.

In Figs.~\hyperref[fig:figure4]{\ref{fig:figure4}d}-\hyperref[fig:figure4]{\ref{fig:figure4}g} we compare the different estimators to each other for fixed measurement sets. In all cases, the Bayesian mean estimator produces the lowest average risk. Except for the Bell state measurements, where maximum-likelihood estimation and direct inversion coincide, the average risk of the maximum-likelihood estimator is lower than the average risk of the direct inversion. 

\begin{figure}
  \centering
  \includegraphics[width=\columnwidth]{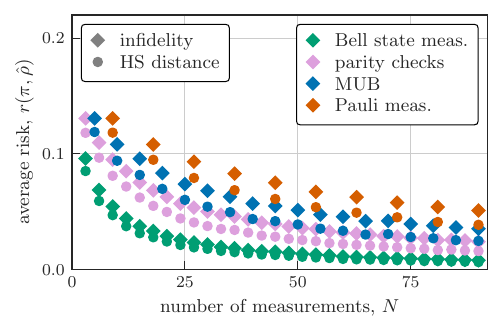}
  \caption{Average risk for Bayesian mean estimation over a uniform prior plotted against the number of measurements $N$ for different types of measurements and different loss functions. The plot displays the numerical results of Fig.~\hyperref[fig:figure4]{\ref{fig:figure4}a} for the Hilbert-Schmidt distance alongside the results obtained when defining the loss function with the infidelity. The same sample size and discretization as in Fig.~\ref{fig:figure4} are used.}
  \label{fig:figure5}
\end{figure}


\section{Discussion}

In summary, we discussed different estimators and measurement sets for Bell diagonal state estimation. We showed that Bell state measurements are optimal in a general setting and introduced the LOCC-implementable Pauli parity checks for Bell diagonal state estimation in a network context. We derived bounds on the measurement statistics of LOCC-implementable measurements on Bell diagonal states and showed that Pauli parity checks meet these bounds. Furthermore, for Pauli parity checks, we provided an analytical expression for the risk of direct inversion and established an upper bound on the average risk of the Bayesian mean estimator. In addition to these analytical results, a numerical analysis comparing different measurement sets and estimators highlights the advantages of Bayesian mean estimation.

The reported improved scaling of the average risk for Bayesian mean estimation, compared to maximum-likelihood estimation or state tomography, is not tied to the specific structure of Bell diagonal states and seen across all discussed measurements. We therefore expect comparable behavior of the estimation risk when analyzing other parametrized states. Beyond the advantages in terms of the average risk, a key motivation for using Bayesian mean estimation over maximum-likelihood estimation or state tomography is its well-defined uncertainty quantification, for example, through the covariance matrix of the posterior distribution, and its natural compatibility with prior knowledge about the state~\cite{blume2010optimal}. Bayesian mean estimation is typically more computationally demanding than alternative methods because it requires representing a probability distribution over a high-dimensional space, a challenge that is often addressed using Monte Carlo sampling techniques~\cite{blume2010optimal, Granade2016, Lukens2020}. However, in the case of Bell diagonal state estimation, where the state can be parametrized by just three parameters, the space is small enough to allow for direct discretization. As a result, Bayesian mean estimation offers substantial advantages for Bell diagonal state estimation without significant additional implementation costs.

In scenarios where not all four components of $\theta$ are of equal interest, it is possible to introduce weights into the loss function of Eq.~\eqref{eq:HS-distance_t}. The corresponding risk can be related to a weighted scalar Cramér-Rao bound (see Sec.~\ref{sec:QCR}), and we believe it is possible to extend the derived average risk expressions for this more general loss function.

While we provided an indication for the conjecture that the Pauli parity checks are the optimal set of LOCC-implementable measurements for Bell diagonal state estimation, no proof is provided in this work. A possible approach could be to minimize the Fisher information over the set of LOCC-implementable POVMs. 

In Sec.~\ref{sec:BellStateMeas} we showed that there is no advantage of collective measurements or adaptive measurements over Bell state measurements. However, in the case of LOCC-implementable measurements, collective measurements on multiple instances are expected to reach lower average risks. An example of this is visible in Fig.~\ref{fig:figure3}, where the ordered Pauli parity checks, a simple example of a POVM acting on three states, achieve lower average risks than the random Pauli parity check. While beyond the scope of this work, there are certainly more elaborate adaptive and collective techniques to explore.

Recent work has explored the possibility of characterizing Bell diagonal states with $\theta_1 > 0.5$ as a by-product of an entanglement distillation protocol~\cite{casapao2024disti}. We leave for future work the analysis of these hybrid estimation protocols with a double figure of merit. 

The presented protocol, like most characterization or tomographic schemes, relies on partial knowledge of the devices and assumes trust in the messages received from the other party. In particular, we assume control over the measurement procedure, and therefore, the protocol is not device independent. In a networked setting, trust between parties typically follows from a user authentication protocol, which is independent of the characterization. Consequently, integrating the proposed protocol into a broader network stack is a feasible extension of this work.

There is no obvious equivalent to the set of Bell states for multidimensional bipartite systems or multipartite qubit systems. For the former case, Ref.~\cite{Sych2009} presented a set of states with many of the properties of the Bell states for bipartite qudit systems where the dimension of each system is a power of 2. That work showed the connections of those states to the generalized Pauli matrices. This could, combined with the insights of this work, serve as a starting point for an estimation protocol for convex combinations of those maximally entangled multidimensional bipartite states. When entanglement is shared among more than two parties, for example, in a quantum network, there is no obvious distinctive measure for multipartite entanglement~\cite{Walter2019} and therefore there exists no unique generalization of bipartite maximally entangled states. For an extension of this work, interesting sets of states to look at would be the absolute maximally entangled states~\cite{Goyeneche2015} and Greenberger–Horne–Zeilinger (GHZ) states. Tripartite GHZ states appear to be particularly interesting, as for those states, each bipartition carries maximum entanglement. Therefore, for each bipartition, the estimation of a convex combination of such states has a similar structure to the Bell diagonal state estimation.

\section*{Acknowledgements}

The project was supported by the JST Moonshot R\&D program under Grant No. JPMJMS226C.

\section*{Data Availability}
The data that support the findings of this article are openly available~\cite{Kaufmann2025}.

\section*{Author Contributions}

N.K. conducted the analysis, developed numerical code, performed computations, generated visualizations, authored the initial manuscript draft, implemented suggested revisions, and prepared the final submission. M.Q. conceptualized the project within the broader research theme of the Networked Quantum Devices unit at OIST, provided regular supervision and guidance on technical derivations and numerical approaches, and contributed detailed feedback on the initial draft. D.E. supervised the project. 

\bibliography{References}

\end{document}